\documentclass[envcountsame]{llncs}

\usepackage{amsmath,amssymb,amsfonts}
\usepackage[dvipdfmx]{graphicx}
\usepackage{listings}
\usepackage{xcolor}
\usepackage{url}
\usepackage[caption=false]{subfig}
\usepackage{multirow}
\usepackage{multicol}
\usepackage[linesnumbered,ruled,vlined]{algorithm2e}
\usepackage{mathtools}
\usepackage{wrapfig}
\usepackage{booktabs}
\usepackage{here}
\usepackage{bussproofs}

\usepackage{article}


\begin{document}

\mainmatter

\title{A Hypergraph-based Formalization of Hierarchical Reactive Modules   and a Compositional Verification Method}

\author{
    Daisuke Ishii
}
\institute{
    Japan Advanced Institute of Science and Technology, Ishikawa, Japan \\
    \email{dsksh@jaist.ac.jp} 
}

\maketitle

\begin{abstract}
  The compositional approach is important for reasoning about large and complex systems.
  In this work, we address synchronous systems with hierarchical structures, which are often used to model cyber-physical systems.
  We revisit the theory of reactive modules and reformulate it based on hypergraphs to clarify the parallel composition and the hierarchical description of modules.
  Then, we propose an automatic verification method for hierarchical systems.
  Given a system description annotated with assume-guarantee contracts,
  the proposed method divides the system into modules and verifies them separately to show that the top-level system satisfies its contract.
  Our method allows an input to be a circular system in which submodules mutually depend on each other.
  Experimental result shows our method can be effectively implemented using an SMT-based model checker.
  %
\end{abstract}

\section{Introduction}

\emph{Synchronous reactive systems} are a basic model used in the design of \emph{cyber-physical systems (CPSs)}~\cite{alurSynchronous2015},
which are typically described as a feedback loop model with plant and digital controller modules.
Although it is important for industrial CPS products to formally verify the safety of their models,
the effort has not been sufficient. The scale and complexity of the models are hampered by the expertise and computational complexity required by formal method tools.

A divide-and-conquer approach could be the cure for scalability.
The theory of reactive modules~\cite{alurReactive1999,alurSynchronous2015} provides a foundation of the compositional reasoning~\cite{giannakopoulouCompositional2018}.
It formalizes a system as a set of \emph{modules} (or agents or components) that behave synchronously on a sequence of rounds and enables to verify the \emph{implementation} (or {refinement}) relation between composite modules.
In the verification, the \emph{assume-guarantee rule}~\cite{alurReactive1999} is crucial to reason about circular systems in which submodules depend on each other. 
Although it has been studied for decades, the theory is still underutilized in practice due to 
its discrepancy from the actual CPS descriptions and the lack of automated verification methods.

In this paper, we consider the verification of synchronous system models that is composed of reactive modules $M_1$,\ldots, $M_n$.
Assuming that each module $M_j$ is given a contract $(M_{a.j},M_{g.j})$ and satisfies it (we denote the fact by $M_j \PC M_{a.j} \Impl M_{g.j}$), we verify that the top-level system $M_1 \PC \cdots \PC M_n$ satisfies its contract
$(M_{a}, M_{g})$.
As is the case in \cite{alurMOCHA1998,bostromContractbased2016,dragomirRefinement2020}, verification based on the assume-guarantee contracts requires an interactive proof process~.
On the other hand, an automatic method~\cite{championCoCoSpec2016,championKIND2016} is proposed that abstracts submodules using their contracts and efficiently performs model checking.
However, because of the abstraction, the method has the disadvantage of finding spurious counterexamples, especially when dealing with circular systems.

The objective of this research is to bridge the gap between the theory of reactive modules and the hierarchical design of practical systems, and to propose an automated compositional verification method based on the theory.
Our contributions are summarized as follows:
\begin{enumerate}
  \item We formalize the hierarchical structure commonly used in practical modeling languages, e.g. Lustre and Simulink, as a composition of reactive modules.
  Our formulation features the use of hypergraphs to describe hierarchical structures.
  We extend the definition of modules for hierarchization and show how it differs from parallel composition.
  \item We propose a compositional verification method 
  that transforms a hierarchical module $M[M_1,\ldots,M_n]$ into a form of a composition $M_1 \PC\cdots\PC M_n \PC M^\dagger$ and then checks that it satisfies a given contract $(M_a,M_g)$.
  We show how to validate the implementation relation
  $M[M_1,\ldots,M_n] \PC M_{a} \Impl M_{g}$ automatically.
  The effectiveness of the method is confirmed by applying our implementation to several examples.
\end{enumerate}

\begin{figure}[t!]
  \centering
  \begin{minipage}[b]{.5\textwidth}
    \centering
    \includegraphics[width=\textwidth]{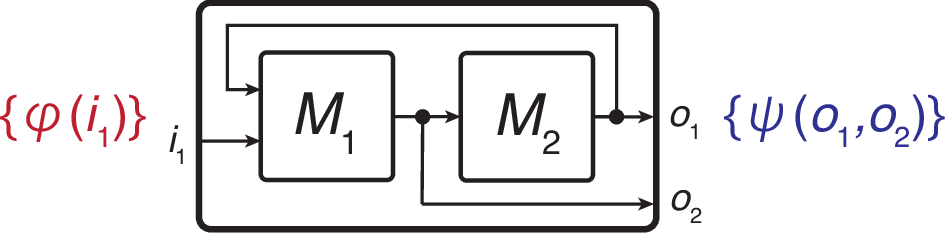} 
    \caption{Example system.}
    \label{f:intro}
  \end{minipage}
  \hspace{.5em}
  \begin{minipage}[b]{.45\textwidth}
    \centering
    \subfloat{\includegraphics[height=.04\textheight]{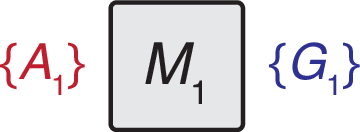}}
   \hspace{1em}
    \subfloat{\includegraphics[height=.04\textheight]{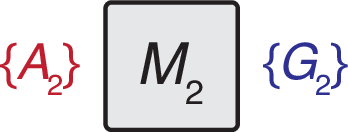}}

    \subfloat{\includegraphics[width=\textwidth]{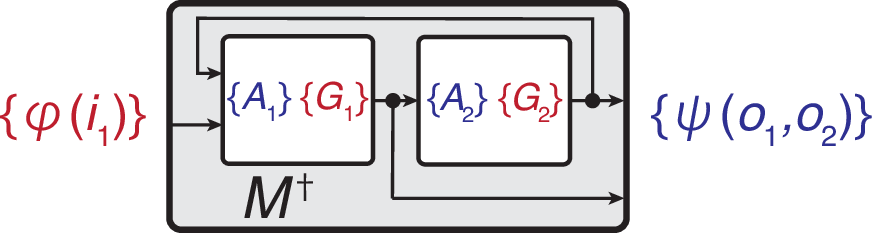}}
    \caption{Compositional verification.} 
    \label{f:intro:cv}
  \end{minipage}
\end{figure}

\noindent
\emph{Example}. 
Hierarchical modules can be illustrated as a flow diagram in Fig.~\ref{f:intro} in which rectangles represent modules e.g. $M_1$;
outer rectangle represents a hierarchical module $M[M_1,M_2]$.
Each module is equipped with input, output, and hidden state variables, and
is interpreted as a reaction relation between their values.
%
We consider a verification of that the module satisfies a contract $(\varphi(i_1),\psi(o_1,o_2))$, assuming the submodules are given sub-contracts $(A_1,G_1)$ and $(A_2,G_2)$.
A compositional verification can be done in three ways:
\begin{itemize}
  \item The method based on the reactive module theory regards $M[M_1,M_2]$ as a parallel composition $M_1 \PC M_2 \PC \cdots$ (if other modules are used, they should be composed together).
  Then, we deduce that the system satisfies the top-level contract by applying the inference rules to the assumptions, along with the composition structure.
  As the number of modules increases, the proof becomes more complex.
  \item The method based on abstraction~\cite{championKIND2016} regards each submodule $M_j$ as a reaction relation, e.g. $\Hist\,A_j \Rightarrow G_j$ described with a past fragment of LTL,
  and verifies $M[M_1,M_2]$ as a whole.
  Since there is a circular wiring in Fig.~\ref{f:intro}, this method may result in a spurious counterexample.
  \item The proposed method decomposes $M[M_1,M_2]$ as $M_1 \PC M_2 \PC M^\dagger$ (Fig.~\ref{f:intro:cv}) where the module $M^\dagger$ represents the top-level description content equipped with interface variables with $M_1$ and $M_2$.
  We formalize $M^\dagger$ using hypergraphs and propose how to properly give sub-contracts to it.
  Then, we show that if the verification for the submodules $M_1$, $M_2$ and $M^\dagger$ succeeds, then the verification goal for $M[M_1,M_2]$ is already valid.
\end{itemize}


\noindent
\emph{Paper organization}.
Sect.~\ref{s:module} introduces the basics of the theory of reactive modules, which is reformulated using hypergraphs.
Sect.~\ref{s:hierarchical} describes a formalization of hierarchical modules.
Sect.~\ref{s:method} presents a proposed method that transforms hierarchical modules into decomposed forms.
Sect.~\ref{s:impl} describes a prototype implementation of the method and
Sect.~\ref{s:xp} reports an experimental result.
Sect.~\ref{s:related} describes the related work.

\vspace*{.5em}

\noindent
\emph{Preliminaries}.
We assume a basic knowledge of \emph{directed hypergraphs} $(V,E)$ where $V$ and $E$ are sets of vertices and \emph{(hyper)edges} (or hyperarcs), respectively.
Each hyperedge consists of two lists of vertices called the \emph{source} (or head) and the \emph{target} (or tail), respectively.
Given $e \in E$, we denote by $\In(e)$ and $\Out(e)$ the sets of the vertices in the source and target.
We call a vertex $v$ \emph{initial} if $\forall e \in E, v \notin \Out(e)$
and \emph{terminal} if $\forall e \in E, v \notin \In(e)$.
For details of the hypergraph theory, see e.g. \cite{ouvrardHypergraphs2020,brettoHypergraph2013}.

\section{Reactive Modules}
\label{s:module}

This section is a run-through introduction to the basics of the reactive module theory~\cite{alurReactive1999,alurSynchronous2015}, which is modified for our purpose.

We consider \emph{variables} typed as \lstinline|unit|, \lstinline|bool|, \lstinline|int|, 
etc.,
referring to the \emph{domains} $\Dom(\text{\lstinline{unit}}) = \{()\}$, $\Dom(\text{\lstinline{bool}}) = \{\True, \AB \False\}$, $\Dom(\text{\lstinline{int}}) = \mathbb{Z}$, 
etc.
Given a variable $v$ of type $t$, we denote its \emph{evaluation} by $\Eval{v} \in \Dom(t)$.
Given a set of variables $V = \{v_1,\ldots, v_n\}$ and a family of types $\{t_v\}_{v \in V}$, we denote the \emph{domain} 
\begin{equation*}
  \prod_{v \in V}\Dom(t_v) \quad \text{if $V \neq \emptyset$}, \qquad
  \{()\} \quad \text{if $V = \emptyset$},
\end{equation*}
by $\Dom(V)$.
%
%
%
For variable sets $V$ ($\neq \emptyset$) and $W \subseteq V$, and a subdomain $D = \mathrm{\Pi}_{v \in V} D_v \subseteq \Dom(V)$, $\pi_{W}(D)$ denotes the projection onto $W$ i.e. $\mathrm{\Pi}_{v \in W} D_v$.

\subsection{Task hypergraphs}

As a unit to describe a composite system, we consider stateless tasks that are non-blocking and can be nondeterministic.
\begin{definition}
  Let $R$ and $W$ be finite and mutually disjoint sets of variables.
  A \emph{task} $e$ with a \emph{read set} $R$ and a \emph{write set} $W$ represents a total relation in $\Dom(R) \times \Dom(W)$ such that $\forall r \!\in\! \Dom(R), \exists w \!\in\! \Dom(W), (r,w) \in e$.
  %
  We also denote a task by $e(R,W)$ to clarify its read and write sets.
\end{definition}
For example, a task $e(\emptyset,W)$ represents a task that outputs a constant or a nondeterministically chosen value in $\Dom(W)$.
Given  $e : \Dom(R) \to \Dom(W)$, it can represent a task that applies the function to the value of $R$ and writes it to $W$.

In this paper, we propose to formalize a composite task description as a hypergraph representing a network of tasks connected via read and write variables.
It aims to be an extension of task graphs in \cite{alurSynchronous2015} to depict relations between both tasks and variables.
\begin{definition} \label{d:tg}
  A \emph{task (hyper)graph} (TG) $(V,E)$ is a directed hypergraph whose vertices represent variables and hyperedges 
  represent tasks.
  We assume that 
  \emph{(i)}~$(V,E)$ is acyclic,
  \emph{(ii)}~no vertex is isolated,
  \emph{(iii)}~each vertex has at most one incoming edge, 
  \emph{(iv)}~for a task $e \in E$ such that $e \in \Dom(R) \Times \Dom(W)$ and a variable $v \in V$, 
  $v \in \In(e) \leftrightarrow v \in R$ and $v \in \Out(e) \leftrightarrow v \in W$.
\end{definition}
If it is clear from the context, we do not distinguish between vertices and variables, or hyperedges and tasks (relations in the variable domains), respectively.
\emph{Precedence relation} between tasks (denoted by $e \prec e'$ in \cite{alurSynchronous2015}) and \emph{await dependency} between variables ($v \succ v'$ in \cite{alurReactive1999,alurSynchronous2015}) are represented by the existence of a path between the two in the graph.
The condition~(iii) prevents conflicts between writes to a variable by multiple tasks.
\begin{example}
  Fig.~\ref{f:tg} illustrates an example TG.
  Each dot (with or without circle) represents a vertex in $\{i_1,i_2,o_1,o_2,s_1,s'_1,l_1\}$ and
  each set of directed lines mediated by a numbered circle represents a hyperedge in $\{e_1,e_2,e_3\}$.
  For instance, $\In(e_2) = \{i_2,s_1\}$ and $\Out(e_2) = \{l_1\}$.
  The hypergraph in Fig.~\ref{f:tg:abst} is not a TG since it contains a cycle.
\end{example}
TGs can be regarded as total relations i.e. tasks.
\begin{definition}
  Let $(V,E)$ be a TG, $R$ the set of initial vertices, and $W$ the set of vertices such that $W \subseteq V \setminus R$.
  We consider a relation
  \begin{equation} \label{eq:tg}
    \exists v_1 \!\in\! \Dom(t_1), \cdots \exists v_m \!\in\! \Dom(t_m), ~
    e_1(R_1,W_1) \land \cdots \land e_n(R_n,W_n),
  \end{equation}
  where $\{v_1,\ldots, v_m\} = V \setminus (R \cup W)$,
  $t_1$,\ldots, $t_m$ are their types, and $\{e_1,\ldots,e_n\} = E$.
  We denote the set of all such relations represented with a TG by $\TG(R,W)$.
\end{definition}
Note that variables in $R$ may always be the initial in the TG, but those in $W$ are not necessarily the terminal (they are shown as circled dots in the figures).
Every variable in $R_j$ or $W_j$ ($j = 1,\ldots,n$) not included in $R \cup W$ are bound by a quantifier in Eq.~\eqref{eq:tg}.

\begin{lemma}
  Every relation in $\TG(R,W)$ is total.
\end{lemma}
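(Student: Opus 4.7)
The plan is to prove totality by constructing, for an arbitrary evaluation of the read variables in $R$, a consistent evaluation of all other vertices via a topological sweep of the hypergraph. The intuition is that a TG is really a wiring diagram of total ``gates'': acyclicity lets us fire the gates in a safe order, and condition (iii) guarantees that each firing freshly assigns its target variables without clashing with any earlier commitment.

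First I would fix an arbitrary $r \in \Dom(R)$ and appeal to condition (i) to obtain a topological order on $V$ in which, for every edge $e$, all vertices of $\In(e)$ precede every vertex of $\Out(e)$. Since $R$ is exactly the set of initial vertices, I can evaluate each $v \in R$ according to $r$ before processing any edge. Then I would walk through the edges $e_1,\ldots,e_n$ in the induced order: when it is $e_j$'s turn, all of $R_j \subseteq \In(e_j)$ has already been evaluated, so by totality of the task $e_j$ there exists some $w_j \in \Dom(W_j)$ with $(\text{current values of }R_j,\,w_j) \in e_j$, and I assign the variables of $W_j = \Out(e_j)$ accordingly. Condition (iii) is precisely what is needed to ensure that no vertex in $W_j$ has already been written by an earlier edge, so this assignment is unambiguous and respects every previously satisfied conjunct.

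Once every edge has been processed, every vertex of $V$ carries a value and every $e_j(R_j,W_j)$ in Eq.~\eqref{eq:tg} is satisfied. I would then read off the witnesses: the values chosen for $V \setminus (R \cup W)$ serve as witnesses for the existential quantifiers, while the restriction of the constructed evaluation to $W$ provides a $w \in \Dom(W)$ such that $(r,w)$ belongs to the relation. Since $r$ was arbitrary, this establishes totality.

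The main obstacle I anticipate is purely bookkeeping: verifying that the topological sweep is well-defined (which uses condition (i)) and that conditions (iii) and (iv) together make the edge-by-edge writes mutually compatible. Condition (ii) plays no role in the argument, and condition (iv) is only needed to identify the syntactic read/write sets of each $e_j$ with its incoming/outgoing vertices in the hypergraph. Once those formalities are checked, the proof reduces to a straightforward induction on the length of the topological order.
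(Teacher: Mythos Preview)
Your proposal is correct and follows essentially the same approach as the paper: both arguments perform a topological sweep over the acyclic hypergraph, firing each task once its inputs are available and invoking the totality of individual tasks to produce output values. The paper phrases the induction in terms of grouping vertices by longest-path distance from the initial vertices, whereas you use a generic topological order and iterate over edges; this difference is cosmetic. If anything, your write-up is slightly more explicit than the paper's in invoking condition~(iii) to rule out write conflicts during the sweep.
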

\begin{proof}
  The vertices in a TG can be partially ordered by the lengths of the longest paths from any initial vertex; we group the vertices according to the ordering.
  The initial vertices in $R$ belong to the first group.
  The vertices written by tasks with the empty read set belong to the second group.
  Then, we check 
  $\forall {u}_r \!\in\! \Dom(R), \exists {u}_w \!\in\! \Dom(W), \React({u}_r, {u}_w)$ 
  holds where $\React$ represents the relation~\eqref{eq:tg}.
  We check by induction that a value exists for vertices in every group to satisfy the relation.
  The first group is universally quantified, so any values can be assigned to the vertices.
  Assuming that the previous groups has been assigned values, 
  the values for the next group are determined by the incoming tasks.
  \qed
\end{proof}


Although tasks in \cite{alurSynchronous2015} are stateful, we formulate them as stateless.
Modules defined later specify the state variables among the read/write set of tasks and properly manage the states.
\emph{Atoms} are used instead of tasks in \cite{alurReactive1999}
to represent the initialization of the state of a module when executed and the reactions in each round. We embed the initial conditions in modules and represent only the reactions with TGs.

\subsection{Modules, implementation relation, and parallel composition}

Reactive and synchronous systems are formalized as compositional modules (called \emph{components} in \cite{alurSynchronous2015}) executed in a series of rounds.
\begin{definition}[\cite{alurReactive1999,alurSynchronous2015}]
A \emph{module} is a tuple $(I, O, S, \Init, \React)$ where
$I$, $O$ and $S$ are mutually disjoint sets of \emph{input}, \emph{output} and \emph{state} variables.
$\Init \subseteq \Dom(S)$ is an \emph{initial condition}, and
$\React$ is a \emph{reaction relation} that is a TG in $\TG(S \cup I, O \cup S')$, where $S'$ represents a set of variables renamed from $S$.
An 
\emph{execution} of a module is a 
sequence of reactions
\begin{multline*}
    s(-1) \xrightarrow{i(0)/o(0)} s(0) \xrightarrow{i(1)/o(1)} s(1)
    \cdots 
    ~~=~~ \\
    \{(s({j\!-\!1}), i(j), o(j), s(j)) \in \React \mid j \!\in\! \{0\} \cup \mathbb{N}, 
    s({-1}) \!\in\! \Init\}.
\end{multline*}
%
A \emph{trace} of an execution is 
a sequence of values $(i(0),o(0))\,(i(1),o(1)) \cdots$, 
i.e. the projection onto $I \cup O$.
\end{definition}
Note that $\React$ is interpreted as a set of quadruples of values, each of which is a family of values indexed by $S$, $I$, $O$ or $S'$.
We assume the state variables in $S$ and $S'$ always be the initial and terminal vertices of $\React$, respectively.
Differently from the formalization in \cite{alurSynchronous2015},
which embeds state variables within tasks,
modules must designate the vertices representing state variables from among the TG's initial and terminal vertices.
Hereafter, for a module $M_i$ with an identifier $i$, we denote its elements by e.g. $I_i$ and $\React_i$.

\begin{figure}[t!]
  \centering
  \begin{minipage}[b]{.45\textwidth}
    \centering
    \includegraphics[width=.65\textwidth]{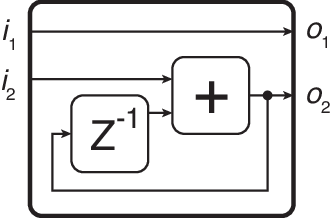} 
    \caption{A signal flow diagram describing the module $M_\Ex{\ref{e1}}$.}
    \label{f:sfd}
  \end{minipage}
  \begin{minipage}[b]{.45\textwidth}
    \centering
    \includegraphics[height=.12\textheight]{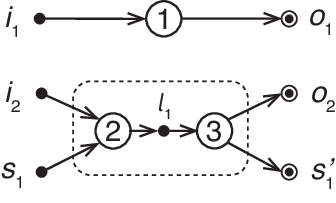} 
    \caption{The task hypergraph $\React_\Ex{\ref{e1}}$.} 
    \label{f:tg}
    \vspace{1em}
  \end{minipage}
  \vspace*{1em}

  \begin{minipage}[b]{.4\textwidth}
    \centering
    \includegraphics[height=.12\textheight]{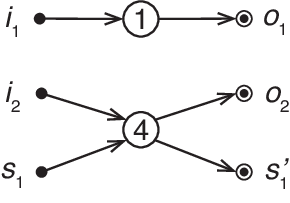} 
    \caption{Another TG for Ex.~\ref{e1}.} 
    \label{f:tg:simple}
  \end{minipage}
  \begin{minipage}[b]{.55\textwidth}
    \centering
    \includegraphics[width=.7\textwidth]{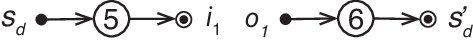}
    \caption{$\React_\mathrm{delay}$.} 
    \label{f:tg:delay}

    \vspace*{.5em}
    \includegraphics[width=.85\textwidth]{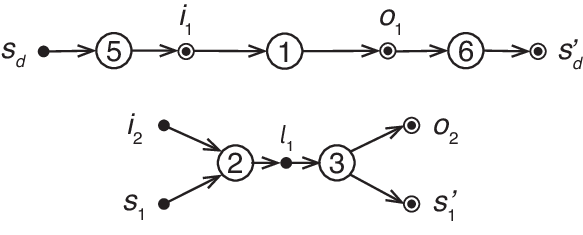}
    \caption{The TG of $M_\Ex{\ref{e1}} \PC M_\mathrm{delay}$.} 
    \label{f:tg:pc}
  \end{minipage}
\end{figure}

If $I$, $O$ or $S$ of a module is empty, there may be an execution that involves the value $()$.
For example, the module $M_\top$ defined by $(\emptyset,\emptyset,\emptyset,\{()\},\{((),(),(),())\})$ has an execution represented by a sequence of $((),(),(),())$.

As a graphical language to describe modules (e.g. Simulink), we consider \emph{signal flow diagrams} (SFDs).
An SFD consists of rectangles and directed lines 
annotated with variable names and other labels.
The rectangles represent modules that can be stateful and the lines represent synchronous communication between the processes.
Input signals are connected to the left side and output signals are extracted from the right side of rectangles.
\begin{example} \label{e1}
  The SFD of a counter constructed with an addition module and a delay module is shown in Fig.~\ref{f:sfd},
  which consists of 
  $I_\Ex{\ref{e1}} = \{i_{1},i_{2}\}$, $O_\Ex{\ref{e1}}=\{o_{1},o_{2}\}$, $S_\Ex{\ref{e1}}=\{s_{1}\}$, 
  $\Init_\Ex{\ref{e1}}(s_1) \equiv (\Eval{s_1}=0)$, and
  $\React_\Ex{\ref{e1}}$ represents a function $(\Eval{i_{1}},\Eval{i_{2}},\Eval{s_{1}}) \mapsto (\Eval{i_{1}},\Eval{i_{2}}+\Eval{s_{1}},\Eval{i_{2}}+\Eval{s_{1}})$.
  Fig.~\ref{f:tg} shows $\React_\Ex{\ref{e1}}$ as a TG.
  The vertices in $O_\Ex{\ref{e1}}$ are shown as circled dots and
  the vertex $l_1$ represents a local variable that does not belong to $I_\Ex{\ref{e1}}$, $O_\Ex{\ref{e1}}$ or $S_\Ex{\ref{e1}}$.
  The hyperedges $e_1$, $e_2$ and $e_3$ represent the identity, addition and copy functions.
  %
  An execution can be
  \begin{equation*}
      0 \xrightarrow{(\True,1)/(\True,1)} 
      1 \xrightarrow{(\False,0)/(\False,1)}
      1 \xrightarrow{(\True,2)/(\True,3)} 
      3 \cdots. 
  \end{equation*}
\end{example}
\begin{example}
  There can be multiple TGs representing a reaction relation.
  Fig.~\ref{f:tg:simple} shows another TG for $\React_\Ex{\ref{e1}}$
  in which $e_4$ outputs two copies of the sum.
\end{example}

We also represent a safety property of a coexisting module as a module.
A safety property $\varphi(v)$ involving a list of variables $v = (v_1,\ldots,v_n)$ is a module $M_{\varphi(v)}$ with $O_{\varphi(v)} = \{v_1,\ldots,v_n\}$, which nondeterministically outputs a signal satisfying $\varphi(v)$.
%
\begin{example}
  We can consider an invariance property $\Always\,o_2 \geq 0$ of $M_\Ex{\ref{e1}}$ described by the LTL.
  Output signals represented by variable $o_2$ of the module $M_\Ex{\ref{e1}}$ 
  must be mimicked by 
  $M_{\Always\,o_2 \geq 0}$.
\end{example}



Next, we introduce the implementation (or refinement) relation and the composition mechanism for the modules. 
\begin{definition}[\cite{alurReactive1999}]
  Let $M_1$ and $M_2$ be modules.
  We say $M_1$ \emph{implements} $M_2$, denoted by $M_1 \Impl M_2$, if
  \emph{(i)}~$O_2 \subseteq O_1$,
  \emph{(ii)}~$I_2 \subseteq I_1 \cup O_1$,
  \emph{(iii)}~the await dependency of $y \in O_2$ on $x \in I_2 \cup O_2$ in $\React_2$ (i.e. $y \succ x$) is preserved in $\React_1$, and
  \emph{(iv)}~for every trace $t$ of $M_1$, the projection of $t$ onto $I_2 \cup O_2$ is a trace of $M_2$.
  We denote 
  $M_1 \Impl M_2 \land M_2 \Impl M_1$ by $M_1 \cong M_2$.
\end{definition}

\begin{lemma}[\cite{alurReactive1999}]
  The implementation relation is a preorder.
\end{lemma}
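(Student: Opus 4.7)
The plan is to verify the two defining properties of a preorder---reflexivity and transitivity---by unpacking the four clauses of the implementation relation and checking each in turn.

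For reflexivity, I would argue that $M \Impl M$ holds trivially for every module $M$: clauses (i) and (ii) reduce to $O \subseteq O$ and $I \subseteq I \cup O$, clause (iii) asks that await dependencies of $\React$ be preserved in $\React$ itself, and clause (iv) asks that each trace of $M$, when projected onto $I \cup O$, be a trace of $M$, which is an identity projection. So all four conditions are immediate.

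For transitivity, I would assume $M_1 \Impl M_2$ and $M_2 \Impl M_3$ and verify each clause for the pair $(M_1, M_3)$ by chaining the corresponding clauses for the two given implementations. Clause (i) follows from $O_3 \subseteq O_2 \subseteq O_1$. Clause (ii) uses $I_3 \subseteq I_2 \cup O_2 \subseteq (I_1 \cup O_1) \cup O_1 = I_1 \cup O_1$. Clause (iii) transports any await dependency $y \succ x$ from $\React_3$ to $\React_2$ and then to $\React_1$, using the preservation hypothesis twice.

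The step requiring the most care is clause (iv). Given a trace $t$ of $M_1$, the hypothesis $M_1 \Impl M_2$ gives that $\pi_{I_2 \cup O_2}(t)$ is a trace of $M_2$; applying $M_2 \Impl M_3$ to that projected trace yields that $\pi_{I_3 \cup O_3}\bigl(\pi_{I_2 \cup O_2}(t)\bigr)$ is a trace of $M_3$. To conclude that $\pi_{I_3 \cup O_3}(t)$ itself is a trace of $M_3$, I would observe that $I_3 \cup O_3 \subseteq I_2 \cup O_2$---which follows from $O_3 \subseteq O_2$ and $I_3 \subseteq I_2 \cup O_2$ established above---so that the composed projection agrees with $\pi_{I_3 \cup O_3}(t)$. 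The main (mild) obstacle is precisely this bookkeeping: ensuring that clauses (i) and (ii) for the chain give exactly the variable inclusion needed to justify factoring the projection. Once that is observed, both reflexivity and transitivity follow, and the implementation relation is a preorder. \qed
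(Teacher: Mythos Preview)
Your proposal is correct: the direct verification of reflexivity and transitivity by checking clauses (i)--(iv) in turn is exactly the natural argument, and the bookkeeping you flag for clause~(iv) --- namely the inclusion $I_3 \cup O_3 \subseteq I_2 \cup O_2$ needed to factor the projection --- is indeed the only point requiring care and is handled correctly. The paper itself does not give a proof of this lemma; it simply cites the original source~\cite{alurReactive1999}, so there is nothing to compare against beyond noting that your argument is the standard one.
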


\begin{definition}[\cite{alurReactive1999,alurSynchronous2015}]
  We say modules $M_1$ and $M_2$ are \emph{compatible} if  \emph{(i)}~$O_1 \cap O_2 = S_1 \cap S_2 = \emptyset$ and \emph{(ii)}~$\React_1 \cup \React_2$ (the union as graphs) is acyclic.
  %
  The \emph{parallel composition (PC)} $M_1 \PC M_2$ is a module $(I,O,S,\Init,\React)$, 
  where $I = (I_1 \cup I_2) \setminus O$, $O = O_1 \cup O_2$, $S = S_1 \cup S_2$, 
  $\Init = \Init_1\Times\Init_2$ and 
  $\React = \React_1 \cup \React_2$.
\end{definition}
\begin{example} \label{ex:pc}
  Consider to compose the module in Ex.~\ref{e1} with a module $M_\mathrm{delay}$ that represents a delay task whose TG is Fig.~\ref{f:tg:delay}.
  The TG of PC $M_\Ex{\ref{e1}} \PC M_\mathrm{delay}$ is shown in Fig.~\ref{f:tg:pc}.
\end{example}
\begin{lemma}[\cite{alurReactive1999}]
  The PC operation is 
  associative, transitive and symmetric.
\end{lemma}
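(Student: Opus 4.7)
The plan is to unfold the definition of parallel composition coordinate-by-coordinate on $(I,O,S,\Init,\React)$ and reduce each claim to the corresponding algebraic property of set union, Cartesian product, or hypergraph union.

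First I would dispatch symmetry. Since $O = O_1 \cup O_2$, $S = S_1 \cup S_2$ and $\React = \React_1 \cup \React_2$ are defined as (hyper)graph unions, they are commutative; the input set $I = (I_1 \cup I_2) \setminus O$ then depends only on the symmetric data $I_1 \cup I_2$ and $O$, so swapping operands leaves every component equal. The initial condition $\Init_1 \Times \Init_2$ matches $\Init_2 \Times \Init_1$ once we index tuples by the variable set $S$ rather than by position. The compatibility premises $O_1 \cap O_2 = S_1 \cap S_2 = \emptyset$ and acyclicity of $\React_1 \cup \React_2$ are manifestly symmetric in $M_1, M_2$.

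Next I would tackle associativity by expanding both $(M_1 \PC M_2) \PC M_3$ and $M_1 \PC (M_2 \PC M_3)$. The outputs of both equal $O_1 \cup O_2 \cup O_3$, the states equal $S_1 \cup S_2 \cup S_3$, and the reactions equal the three-way hypergraph union $\React_1 \cup \React_2 \cup \React_3$, all by associativity of set union. The initial conditions agree up to reassociation of the Cartesian product, which is an equality after reindexing by $S$. The input set needs a small calculation: on the left one obtains $\bigl(((I_1 \cup I_2) \setminus (O_1 \cup O_2)) \cup I_3\bigr) \setminus (O_1 \cup O_2 \cup O_3)$ and on the right $\bigl(I_1 \cup ((I_2 \cup I_3) \setminus (O_2 \cup O_3))\bigr) \setminus (O_1 \cup O_2 \cup O_3)$, and both simplify to $(I_1 \cup I_2 \cup I_3) \setminus (O_1 \cup O_2 \cup O_3)$ because the outer set difference already subsumes the inner ones.

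The main obstacle is reconciling the compatibility premises across the two associations, i.e.\ showing that ``$M_1$ compatible with $M_2$ and $M_1 \PC M_2$ compatible with $M_3$'' is equivalent to ``$M_2$ compatible with $M_3$ and $M_1$ compatible with $M_2 \PC M_3$''. The output and state disjointness conditions in either case reduce to the same three pairwise intersections being empty, hence are associativity-invariant. The slightly subtler sub-claim is that acyclicity of $(\React_1 \cup \React_2) \cup \React_3$ coincides with acyclicity of $\React_1 \cup (\React_2 \cup \React_3)$; this holds because the underlying vertex and hyperedge sets are the same three-way union, so a directed cycle in one is a directed cycle in the other. Once these equivalences are verified, the two nested compositions coincide component-wise. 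Regarding the word ``transitive'' in the statement: since PC is a binary operator, I read this as an alternative phrasing of the commutativity claim already handled, so it requires no separate argument.
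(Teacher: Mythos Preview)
The paper does not supply its own proof of this lemma; it is stated with a citation to \cite{alurReactive1999} and no proof environment follows. Your coordinate-by-coordinate unfolding is the standard argument and is correct for symmetry and associativity, including the check that the two possible sets of compatibility premises on three modules both reduce to pairwise output/state disjointness together with acyclicity of the single three-way hypergraph union.

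One small remark: your reading of ``transitive'' as a restatement of commutativity is not really justified, since transitivity is a property of relations, not of binary operations. The word is awkward as applied to $\PC$; most likely it is either a loose synonym for associativity or a carry-over from the source reference. In any case this is an oddity of the statement rather than a gap in your argument, and since the paper itself offers no clarification or proof, there is nothing further to compare against.
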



To deal with extra output variables added by the PC operation, we introduce the {hiding} operator.
\begin{definition}[\cite{alurReactive1999,alurSynchronous2015}] \label{d:hiding}
  Given a module $M$ and a subset of output variables $O' \subseteq O$, \emph{hiding} of $O'$ in $M$, denoted $M \setminus O'$, is a module consisting of the same elements as $M$ but excluding $O'$ from $O$.
\end{definition}

\subsection{Compositional verification}

We consider to verify that a module $M$ fulfils an \emph{assume-guarantee contract} $(M_a,M_g)$, i.e. a pair of modules.
For that purpose, we can show $M \PC M_a \Impl M_g$ holds.
In our experiment in Sect.~\ref{s:xp}, we consider contracts consisting of safety properties.
\begin{example} \label{ex:impl}
  $M_\Ex{\ref{e1}} \PC M_{\Always i_2 \geq 0} \Impl M_{\Always o_2 \geq 0}$ holds.
\end{example}

In this paper, we consider the verification of systems composed of $n$ submodules.
Here, we assume that each submodule satisfies a given contract, and
aim at efficiently verifying the fact that the entire system fulfils the top-level contract by utilizing the assumptions.
\begin{definition} \label{d:cv}
  A \emph{compositional verification problem} consists of 
  $n$ modules $M_1$, \ldots, $M_n$, a top-level contract $(M_a,M_g)$ and $n$ sub-contracts $(M_{a.j}, M_{g.j})$ 
  where $j = 1,\ldots,n$;
  we assume $M_j \PC M_{a.j} \Impl M_{g.j}$ for every $j$.
  The \emph{goal} is a condition $M_1 \PC \cdots \PC M_n \PC M_a \Impl M_g$.
\end{definition}
The module $M_\top$ can be used to omit some elements of contracts.
For arbitrary $M$, $M \Impl M_\top$ and $M \PC M_\top \cong M$ hold.

The following lemma provides two basic inference rules for the compositional reasoning 
on modules.
The second rule allows a PC $M_1 \PC M_2$ to be \emph{circular} i.e. $I_1 \cap O_2$ and $I_2 \cap O_1$ are nonempty.
\begin{lemma}[\cite{alurReactive1999}] \label{th:cr}
  Let $M_1$, $M_2$, $M_3$ and $M_4$ be modules, where 
  $M_1$ and $M_2$, and $M_3$ and $M_4$ are respectively compatible,
  and $I_3 \cup I_4 \subseteq I_1 \cup I_2 \cup O_1 \cup O_2$.
  \emph{(i)}~$M_1 \PC M_2 \Impl M_1$.
  \emph{(ii)}~If $M_1 \PC M_3 \Impl M_4$ and $M_2 \PC M_4 \Impl M_3$, then $M_1 \PC M_2 \Impl M_3 \PC M_4$.
\end{lemma}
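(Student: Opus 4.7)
\medskip
\noindent
My plan is to handle the two parts separately. Part (i) will be a direct check against each clause of the implementation relation, relying only on the set-theoretic form of parallel composition. Part (ii) will require an induction argument that unpacks the apparently circular use of the two hypotheses.

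\medskip
\noindent
For (i), I would verify the four conditions of $\Impl$ in turn. Output inclusion holds since $O_1 \subseteq O_1 \cup O_2 = O_{1 \PC 2}$. For input inclusion, I would partition $I_1 = (I_1 \setminus O_2) \cup (I_1 \cap O_2)$; the first piece lies in $I_{1 \PC 2} = (I_1 \cup I_2) \setminus O_{1 \PC 2}$ and the second in $O_{1 \PC 2}$. Await preservation is immediate because $\React_1$ is, by definition of PC on TGs, a subhypergraph of $\React_{1 \PC 2} = \React_1 \cup \React_2$, so every path in $\React_1$ survives. Finally, trace projection follows because every reaction in $\React_{1 \PC 2}$, after existentially quantifying out the variables private to $M_2$, yields a reaction in $\React_1$.

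\medskip
\noindent
For (ii), the interface conditions $O_{3 \PC 4} \subseteq O_{1 \PC 2}$ and $I_{3 \PC 4} \subseteq I_{1 \PC 2} \cup O_{1 \PC 2}$, together with await preservation, follow by combining the corresponding clauses of the two hypotheses with the given inclusion $I_3 \cup I_4 \subseteq I_1 \cup I_2 \cup O_1 \cup O_2$. The substantive content is trace inclusion. Given a trace $t$ of $M_1 \PC M_2$, I would show by induction on the round index $j$ that the prefix of $t$ through round $j$, projected onto $I_{3 \PC 4} \cup O_{3 \PC 4}$, is a prefix of a trace of $M_3 \PC M_4$. For the step, a nested induction on the topological order of vertices in the hypergraph $\React_3 \cup \React_4$ (which is acyclic because $M_3$ and $M_4$ are compatible) lets me process outputs one at a time: to validate a value written to $v \in O_4$, I invoke $M_1 \PC M_3 \Impl M_4$ on the fragment consisting of $M_1$'s values from $t$ together with the $M_3$-outputs already validated earlier in the topological order; this forces $v$ to be consistent with $\React_4$. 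Symmetrically for $v \in O_3$ via $M_2 \PC M_4 \Impl M_3$.

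\medskip
\noindent
The main obstacle will be justifying the use of each hypothesis at what superficially looks like a circular moment within round $j$: applying $M_1 \PC M_3 \Impl M_4$ requires that the partial valuation being fed in genuinely extends to a trace of $M_1 \PC M_3$, which in turn requires $M_3$'s outputs to be consistent with $\React_3$; yet the dual premise needs $M_4$'s outputs. The resolution is precisely the topological ordering provided by acyclicity of $\React_3 \cup \React_4$: every individual output depends only on values produced earlier in the order, so the two implementations can be peeled off one vertex at a time without any genuine circularity. Combined with the outer round-by-round induction, this yields a unique trace of $M_3 \PC M_4$ matching $t$ on $I_{3 \PC 4} \cup O_{3 \PC 4}$, completing the argument.
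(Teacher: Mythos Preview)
The paper does not supply its own proof of this lemma: it is stated with the citation \cite{alurReactive1999} and no argument is given. So there is nothing in the present paper to compare your proposal against.

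That said, your outline is essentially the standard proof from the cited Alur--Henzinger paper. Part~(i) is exactly the routine check you describe. For part~(ii), the original argument also proceeds by induction on rounds together with a well-founded induction along the await-dependency order on $O_3 \cup O_4$, exploiting acyclicity of the combined dependency relation to break the apparent circularity; your nested induction on the topological order of $\React_3 \cup \React_4$ is the hypergraph rephrasing of that same idea. One point you glide over: establishing clause~(iii) (await preservation) for $M_1 \PC M_2 \Impl M_3 \PC M_4$ is not quite ``combine the corresponding clauses,'' because a dependency path in $\React_3 \cup \React_4$ may alternate between $O_3$- and $O_4$-vertices, so you need to chase it segment by segment through the two hypotheses; this is straightforward but worth spelling out. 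Otherwise your plan is sound and matches the original.
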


\begin{example} \label{ex:impl:infer}
  We consider a compositional verification problem of the system
  $M_\Ex{\ref{e1}} \PC M_\mathrm{delay}$ in Ex.~\ref{ex:pc}, which consists of 
  \begin{itemize}
    \item submodules $M_\Ex{\ref{e1}}$ and $M_\mathrm{delay}$,
    \item top-level contract $(M_\top, M_{\Always o_2 \geq 0})$, and 
    \item sub-contracts $(M_{\Always i_2 \geq 0},M_{\Always o_2 \geq 0})$ and $(M_{\Always o_2 \geq 0},M_{\Always i_2 \geq 0})$.
  \end{itemize}
  The goal $M_\Ex{\ref{e1}} \PC M_\mathrm{delay} \PC M_\top \Impl M_{\Always o_2 \geq 0}$ is provable.
  First, we can deduce as follows.
  \vspace*{-1em}
  \begin{prooftree} \small
    \AxiomC{$M_{\Ex{\ref{e1}}} \PC M_{\Always i_2 \geq 0} \Impl M_{\Always o_2 \geq 0}$}
    \noLine
    \def\extraVskip{1pt}
    \UnaryInfC{$M_\mathrm{delay} \PC M_{\Always o_2 \geq 0} \Impl M_{\Always i_2 \geq 0}$}
    \RightLabel{\scriptsize Lem.\ref{th:cr}(ii)}
    \UnaryInfC{$M_{\Ex{\ref{e1}}} \PC M_\mathrm{delay} \Impl M_{\Always i_2 \geq 0} \PC M_{\Always o_2 \geq 0}$}
    \AxiomC{}
    \RightLabel{\scriptsize Lem.\ref{th:cr}(i)}
    \UnaryInfC{$M_{\Always i_2 \geq 0} \PC M_{\Always o_2 \geq 0} \Impl M_{\Always o_2 \geq 0}$}
    \RightLabel{\scriptsize Trans.}
    \BinaryInfC{$M_{\Ex{\ref{e1}}} \PC M_\mathrm{delay} \Impl M_{\Always o_2 \geq 0}$}
  \end{prooftree}
  Then, the goal follows from $M_\Ex{\ref{e1}} \PC M_\mathrm{delay} \PC M_\top \Impl M_\Ex{\ref{e1}} \PC M_\mathrm{delay}$.
\end{example}

\section{Hierarchical Reactive Modules}
\label{s:hierarchical}

\begin{figure}[t!]
  \centering
  \begin{minipage}[b]{.4\textwidth}
    \centering
    \includegraphics[height=.12\textheight]{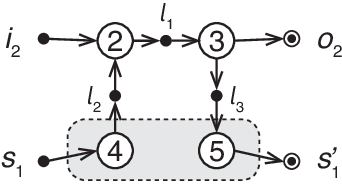} 
    \caption{A detailed TG for Ex.~\ref{e1}.}
    \label{f:tg:detailed}
  \end{minipage}
  \begin{minipage}[b]{.55\textwidth}
    \centering
    \subfloat[Sub(hyper)graph.\label{f:tg:sub}]{\includegraphics[height=.06\textheight]{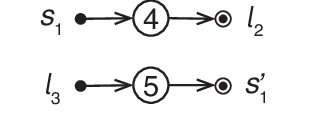}}
    \hspace*{.5em}
    \subfloat[Abstraction.\label{f:tg:abst}]{\includegraphics[height=.1\textheight]{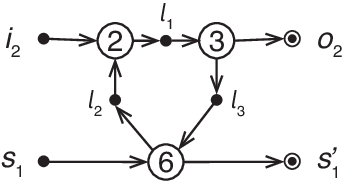}}
    \caption{Separation of a hypergraph.} 
    \label{f:tg:sep}
  \end{minipage}
\end{figure}

Practical languages for describing modules (e.g. Lustre and Simulink) tend to support hierarchical structures.
We formalize such structures by allowing a TG to be hierarchical, separating a subhypergraph of the TG as a submodule.
\begin{definition}
  For a TG $(V,E)$,
  its \emph{sub(hyper)graph} is a TG such that $V' \subseteq V$ and $E' \subseteq E$.
  %
  Assume a subgraph $(V',E')$ of $(V,E)$ is in $\TG(R,W)$.
  Then, the \emph{abstraction of $(V',E')$ in $(V,E)$}
  is a hypergraph $(V'',E'')$ where $V''$ is a set such that $V' \cup V'' = V$ and 
  $E'' = (E \setminus E') \cup \{e\}$ where $e$ is a fresh hyperedge such that $\In(e) = R$ and $\Out(e) = W$.
\end{definition}
According to \cite{brettoHypergraph2013}, subgraphs defined above are partial subhypergraphs without isolated vertex.
Note that an abstraction of a TG 
may contain a cycle as shown in the following example.
\begin{example} \label{ex:sub}
  Consider a hypergraph in Fig.~\ref{f:tg:detailed} that describes yet another TG (lower part) for Ex.~\ref{e1}.
  The graph in Fig.~\ref{f:tg:sub} is its subgraph that consists of the separated edges $e_4$ and $e_5$.
  The abstraction of the subgraph in Fig.~\ref{f:tg:detailed} is shown in Fig.~\ref{f:tg:abst}, in which
  $e_6$ is associated with the subgraph.
\end{example}
%


Now, we consider the TGs of modules to be hierarchical.
Intuitively, a hierarchical module is a module that separates subgraphs as submodules. To ensure the consistency among decomposed descriptions, and because of the proposed method, we assume several conditions.
\begin{definition} \label{d:hierarchical}
  Let $M_1,\ldots,M_n$ be modules. 
  A \emph{hierarchical} module $M[M_1,\ldots, \AB M_n]$ (also denoted by $M$ or $M[M_1..M_n]$) is a module that satisfies 
  %
  the following conditions for each submodule $M_j$:
  \emph{(i)}~$I \cap I_j = O \cap O_j = \emptyset$,
  \emph{(ii)}~$S_j \subseteq S$,
  \emph{(iii)}~$\pi_{S_j}(\Init) \equiv \Init_j$, and
  \emph{(iv)}~$\React_j$ is a subgraph of $\React$.
  %
\end{definition}
The condition~(i) forces a submodule to handle its own input and output variables for the ease of the proposed method in Sect.~\ref{s:method}.
The input and output variables must be copied to local variables before communicating with submodules.
The conditions~(ii) and (iii) make the state variables of a submodule shared with the parent and are maintained properly.
%
A hierarchical module is interpreted as a \emph{flattened} module whose TG embeds the submodules' TGs.
In order for a hierarchical module $M[M_1..M_n]$ and its submodules $M_1$,\ldots, $M_n$ to be interpreted as modules properly,
the TGs of the submodules must avoid a write conflict (Def.~\ref{d:tg}) and their states must be managed with separate variables,
i.e., $\bigcap_{j=1,\ldots,n} O_j = \AB \bigcap_{j=1,\ldots,n} S_j \AB = \emptyset$ must hold.

\begin{example} \label{e2}
  We consider a hierarchical module $M_\Ex{\ref{e2}}[M_1,M_2]$ that has two counter modules as submodules (Fig.~\ref{f:sfd:hierarchical}).
  We assume they are the same as in Ex.~\ref{e1}, except for the variable names.
  Its TG is shown in Fig.~\ref{f:tg:hierarchical} where the dashed frames enclose the hyperedges in the subgraphs.
  The state variables of the submodules are inherited to $M_\Ex{\ref{e2}}$ ($S_\Ex{\ref{e2}} = \{s_1,s_{1.1},s_{2.1}\}$) and $\Init_\Ex{\ref{e2}}$ is set as 
  $\{\top\}\Times\Init_1\Times\Init_2$.
  %
  $M_\Ex{\ref{e2}} \PC M_{\Always i_1 \geq 0} \Impl M_{\Always o_1 \geq 0}$ holds.
\end{example}

\begin{figure}[t!]
  \centering
  \includegraphics[width=.75\textwidth]{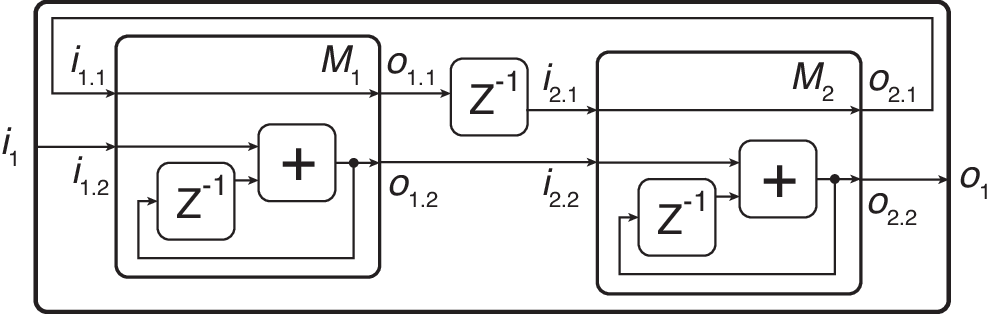} 
  \caption{A hierarchical module $M_\Ex{\ref{e2}}$.}
  \label{f:sfd:hierarchical}
  \vspace*{-7em}
  \includegraphics[width=.9\textwidth]{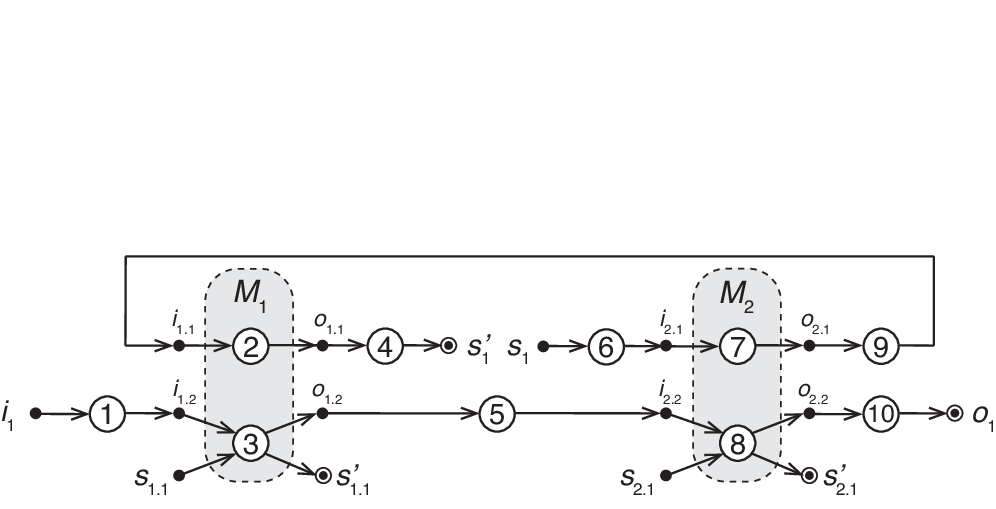} 
  \caption{The TG of $M_\Ex{\ref{e2}}$.}
  \label{f:tg:hierarchical}
\end{figure}



%

\subsubsection{Abstraction of hierarchical modules.}
The hierarchization of modules can be viewed as an abstraction as shown in Ex.~\ref{ex:sub} (Fig.~\ref{f:tg:abst}).
Naively, we can replace a fragment of the TG that belongs to a submodule with a hyperedge, which is then regarded as a complete graph between input and output vertices. Then, it may help to efficiently search for a counterexample that can be executed by the parent module.
The Kind2 model checker~\cite{championKIND2016} abstracts each submodule $M_j$ with a hyperedge representing a property 
$\Hist\,\varphi(i_j) \Rightarrow \psi(o_j)$, 
where $\Hist$ is the pLTL ``historically'' modality, to exploit the contract given as a pair of safety properties $(\varphi(i_j), \psi(o_j))$ (where $i_j$/$o_j$ is a variable list in $I_j$/$O_j$).
Separately, the fact $M_j \PC M_{\varphi(i_j)} \Impl M_{\psi(o_j)}$ has to be verified to check that the submodule $M_j$ satisfies the contract.

\begin{example}
  Let $\varphi(x) \equiv \Always\,x \geq 0$.
  For the submodules $M_j$ of $M_\Ex{\ref{e2}}$ ($j = 1,2$), $M_j \PC M_{\varphi(i_{j.2})} \Impl M_{\varphi(o_{j.2})}$ holds.
  Then, the fact $M_\Ex{\ref{e2}} \PC M_{\varphi(i_1)} \Impl M_{\varphi(o_1)}$ can be verified with an abstraction that replaces each submodule $M_j$ with $M_{\Hist\,\varphi(i_{j.2}) \Rightarrow \varphi(o_{j.2})}$.
\end{example}

While the abstraction methods enable a verification process that leverages the sub-contracts,
they cannot properly handle {circular} systems whose submodules assume the existence of other submodules.
In such cases, 
the abstraction approach might not work;
the process may detect spurious counterexample.

\begin{example} \label{ex:spurious}
  Consider the verification of $\varphi(x) \equiv \Always\,x \geq 0$ on $M_\Ex{\ref{e2}}$ again.
  Let $\psi(x_1,x_2) \equiv \Always(x_1 \land x_2 \geq 0)$.
  For the submodules $M_j$ ($j = 1,2$), $M_j \PC M_{\psi(i_{j.1},i_{j.2})} \Impl M_{\psi(o_{j.1},o_{j.2})}$ holds.
  So, we can abstract the subgraph for $M_j$ with $\Hist\,\psi(i_{j.1},i_{j.2}) \Rightarrow \AB \psi(o_{j.1},o_{j.2})$ in the TG of $M_\Ex{\ref{e2}}$. 
  %
  Then, the verification 
  of $M_\Ex{\ref{e2}} \PC M_{\varphi(i_1)} \Impl M_{\varphi(o_1)}$ 
  will result in a false counterexample such that $\Eval{i_{1.1}} = \Eval{i_{2.1}} = \False$.
\end{example}

\section{Compositional Verification of Hierarchical Modules}
\label{s:method}

In this section, we propose a compositional verification method that can handle circular hierarchical modules.
The method validates that a module $M[M_1..M_n]$ satisfies a contract $(M_a,M_g)$.
We describe how to validate the goal under the assumption 
that every submodule $M_j$ ($j \in \{1,\ldots,n\}$) satisfies its contract $(M_{a.j},M_{g.j})$.
The key idea is to prepare a module $M^\dagger$ called \emph{adapter} by extracting only the top-level part of the hierarchical TG.
The subgraphs are separated from the top-level TG and the variables that correspond to the boundary vertices between the top-level part and the subgraphs are set as input and output variables of the adapter module.
The adapter $M^\dagger$ is prepared in a way $M[M_1..M_n]$ and $M_1 \PC \cdots \PC M_n \PC M^\dagger$ be isomorphic, and thus yields a compositional verification problem (Def.~\ref{d:cv}).

\begin{definition} \label{d:adapt}
  For sets $V_1,\ldots,V_n$, we denote their union by $V_{\{1..n\}}$.
  The \emph{adapter} $\Adapt{M}$ of a hierarchical module $M[M_1..M_n]$
  is a module 
  with the components
  $\Adapt{I} = I \cup O_{\{1..n\}}$, 
  $\Adapt{O} = O \cup I_{\{1..n\}}$, 
  $\Adapt{S} = S \setminus S_{\{1..n\}}$,
  $\Adapt{\Init} = \pi_{\Adapt{S}}(\Init)$, and
  $\Adapt{\React}$ obtained from $\React$ by removing the hyperedges $e_j$ and the vertices corresponded with the variables in $S_j$ for every $j$.
\end{definition}

\begin{figure}[t!]
  \centering
  \includegraphics[width=.75\textwidth]{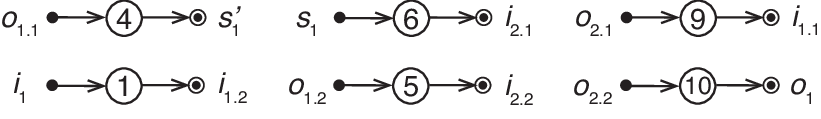} 
  \caption{The TG of $\Adapt{M}_\Ex{\ref{e2}}$.}
  \label{f:tg:adapter}
\end{figure}

\begin{example}
  The TG of $\Adapt{M_\Ex{\ref{e2}}}$ is with 
  $R = \{i_1, o_{1.1},o_{1.2},o_{2.1}, \AB o_{2.2}, \AB s_1\}$ and 
  $W = \{o_1, i_{1.1},i_{1.2}, \AB i_{2.1},i_{2.2}, s'_1\}$.
  It is illustrated as Fig.~\ref{f:tg:adapter}.
\end{example}

A hierarchical module can now be regarded as a PC of decomposed modules. 
However, since extra output variables of the submodules are added by the PC, they must be hidden (Def.~\ref{d:hiding}).
\begin{lemma} \label{th:adapt}
  Let $M[M_1..M_n]$ be a hierarchical module and $M'$ be $M_1 \PC \cdots \PC M_n \PC \AB \Adapt{M}$. 
  Then, $M[M_1..M_n] \cong M' \setminus (I_{\{1..n\}} \cup O_{\{1..n\}})$.
\end{lemma}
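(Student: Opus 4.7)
The strategy is to establish $\cong$ by showing that $M[M_1..M_n]$ and $M' \setminus (I_{\{1..n\}} \cup O_{\{1..n\}})$ agree component-by-component as modules: identical input, output and state sets, identical initial condition, and identical reaction relation viewed as a TG. Once this equality is in place, the four clauses of the implementation relation become trivial in both directions: output containment, input containment, and await dependency preservation hold by set equality, while trace inclusion follows because the two modules admit literally the same executions.

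First, I would unfold the PC and hiding definitions to compute the visible signature of $M' \setminus (I_{\{1..n\}} \cup O_{\{1..n\}})$. Using $\Adapt{I} = I \cup O_{\{1..n\}}$ and $\Adapt{O} = O \cup I_{\{1..n\}}$ together with condition (i) of Def.~\ref{d:hierarchical}, the output set collapses to $O$ (the submodule outputs $O_j$ and the submodule-input vertices $I_j$ added by the PC are exactly what is hidden), and the input set collapses to $I$. The state set $S_{\{1..n\}} \cup \Adapt{S}$ equals $S$ by condition (ii) together with $\Adapt{S} = S \setminus S_{\{1..n\}}$. For the initial condition, condition (iii) combined with $\Adapt{\Init} = \pi_{\Adapt{S}}(\Init)$ and the factorization of $\Init$ over the disjoint pieces $S_1, \ldots, S_n, \Adapt{S}$ yields $\Init_1 \Times \cdots \Times \Init_n \Times \Adapt{\Init} = \Init$. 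For the reaction relation, Def.~\ref{d:adapt} specifies $\Adapt{\React}$ as $\React$ with the subgraphs corresponding to each $M_j$ deleted, so re-unioning them via the PC as $\React_1 \cup \cdots \cup \React_n \cup \Adapt{\React}$ rebuilds $\React$ exactly.

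Before concluding, I would discharge the compatibility hypotheses for each PC in $M_1 \PC \cdots \PC M_n \PC \Adapt{M}$: pairwise disjointness of the $O_j$ and of each $O_j$ with $\Adapt{O}$ follows from Def.~\ref{d:tg}~(iii) applied to $\React$ (every vertex has at most one incoming edge, so no two summands may write the same variable) combined with condition (i); disjointness of the $S_j$ is the well-formedness condition stated after Def.~\ref{d:hierarchical}; and acyclicity of $\React_1 \cup \cdots \cup \React_n \cup \Adapt{\React}$ is inherited from $\React$, which is a TG. The main obstacle will be the bookkeeping around the boundary vertices $I_j$, $O_j$, $S_j$: these simultaneously appear in $\Adapt{\React}$ as interface nodes (becoming inputs or outputs of the adapter) and in each $\React_j$ as outputs, inputs, or states, so I must confirm that the PC union identifies these vertices rather than duplicating them, and that hiding $I_{\{1..n\}} \cup O_{\{1..n\}}$ turns them into the existentially quantified local variables of Eq.~\eqref{eq:tg}, matching how they appear inside $\React$ of the hierarchical module.
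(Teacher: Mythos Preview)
Your proposal is correct and follows essentially the same approach as the paper: both argue component-by-component equality of the two modules (variable sets, initial condition, reaction TG), from which $\cong$ is immediate. Your plan is considerably more thorough than the paper's rather terse proof, which focuses almost entirely on the TG-merging step and dispatches the variable sets in a single sentence; in particular, your explicit treatment of the initial condition and of the compatibility hypotheses for the PC are details the paper leaves implicit.
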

\begin{proof}
  In Def.~\ref{d:adapt}, the variable sets $I_{\{1..n\}}$ and $O_{\{1..n\}}$ are added to the adapter as initial and terminal vertices in $\Adapt{\React}$.
  Then, the PC of $M_j$ and $\Adapt{M}$ merges $\React_j$ and $\Adapt{\React}$ by matching the vertices for the variables in $I_j \cup O_j$.
  Hence, the resulting TG is equivalent of replacing $e_j$ with $\React_j$ in Def.~\ref{d:hierarchical}~(ii).
  %
  Each set of the variables is equal to that of $M[M_1..M_n]$ by the hiding operation.
  \qed
\end{proof}

Once a hierarchical module is decomposed into a PC of submodules and an adapter,
it is possible to perform a compositional verification of facts about the module
based on the theory of reactive modules.
To do so, we can make a proof using the inference rules in Sect.~\ref{s:module}, which may require a manual effort.
However, the following theorem shows that such verification always succeed, 
if the conditions on the submodules and the adapter are valid.
\begin{theorem} \label{t:main}
  Consider a goal $M[M_1..M_n] \,\PC\, M_a \Impl M_g$.
  If \emph{(a)}~$M_j \PC M_{a.j} \AB \Impl M_{g.j}$ for every $j$ and \emph{(b)} $\Adapt{M} \PC M_a \PC M_{g.1} \PC \cdots \PC M_{g.n} \Impl M_g \PC M_{a.1} \PC \cdots \AB \PC M_{a.n}$,
  then the goal holds.
\end{theorem}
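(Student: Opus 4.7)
By Lemma~\ref{th:adapt}, $M[M_1..M_n]$ is $\cong$-equivalent to $(C \PC \Adapt{M}) \setminus H$ where $C := M_1 \PC \cdots \PC M_n$ and $H := I_{\{1..n\}} \cup O_{\{1..n\}}$, so the goal becomes $((C \PC \Adapt{M}) \setminus H) \PC M_a \Impl M_g$. Because $H$ is disjoint from the variables of both $M_a$ and $M_g$ (they live on the top-level interface), the hiding may be pulled through $\PC M_a$ and preserves implementation by $M_g$, so it suffices to prove the flat statement $C \PC \Adapt{M} \PC M_a \Impl M_g$. Abbreviate $A := M_{a.1} \PC \cdots \PC M_{a.n}$ and $G := M_{g.1} \PC \cdots \PC M_{g.n}$.

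I would first upgrade (a) to the compositional form $C \PC A \Impl G$: for each $j$, Lemma~\ref{th:cr}(i) gives $C \PC A \Impl M_j \PC M_{a.j}$, transitivity with (a) gives $C \PC A \Impl M_{g.j}$, and that individual implementation of every $M_{g.j}$ lifts to implementation of the PC $G$ is a routine consequence of the trace-projection definition of $\Impl$, which I would isolate as a small auxiliary lemma. The core of the proof is then a single application of the circular rule Lemma~\ref{th:cr}(ii) with, in the order of that lemma, $C$, $\Adapt{M} \PC M_a$, $A \PC M_g$, and $G$: the first premise $C \PC (A \PC M_g) \Impl G$ follows from $C \PC A \Impl G$ via Lemma~\ref{th:cr}(i); the second premise $(\Adapt{M} \PC M_a) \PC G \Impl A \PC M_g$ is exactly assumption~(b), modulo commutativity of $\PC$. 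The conclusion is $C \PC \Adapt{M} \PC M_a \Impl (A \PC M_g) \PC G$, from which Lemma~\ref{th:cr}(i) and transitivity strip $A$ and $G$ from the right to leave $C \PC \Adapt{M} \PC M_a \Impl M_g$, as required.

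The principal obstacle I anticipate is discharging the side conditions of Lemma~\ref{th:cr}(ii) in the core step: pairwise compatibility (disjoint outputs and state sets, acyclicity of the merged reaction TGs) and the input-containment $I_{A \PC M_g} \cup I_G \subseteq I_C \cup O_C \cup I_{\Adapt{M} \PC M_a} \cup O_{\Adapt{M} \PC M_a}$. Compatibility of $C$ with $\Adapt{M}$ is essentially the content of Lemma~\ref{th:adapt}; compatibility with $M_a$ and the sub-contract modules is ensured because those are safety properties over variables already exposed on the corresponding interface, and acyclicity is inherited from the acyclic $\React$ of $M[M_1..M_n]$. The containment condition likewise holds because every variable of a contract module is by construction an interface variable of the associated submodule or of $M$ itself, hence already visible in $C \PC \Adapt{M} \PC M_a$. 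A secondary care-point is the routine ``conjunctive lifting'' $\bigwedge_j N \Impl M_{g.j} \Rightarrow N \Impl G$, which is best packaged separately rather than inlined.
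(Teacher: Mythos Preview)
Your argument is correct but takes a genuinely different route from the paper's. The paper proceeds \emph{iteratively}: it reads condition~(b) as the case $j=n$ of
\[
M_{[(j+1)..n]} \PC \Adapt{M} \PC M_a \PC M_{g.[1..j]} \ \Impl\ M_g \PC M_{a.[1..j]},
\]
and then, for $j=n,n{-}1,\ldots,1$, pairs this with the single-module fact $M_j \PC M_g \PC M_{a.[1..j]} \Impl M_{g.j}$ (obtained from~(a) and Lemma~\ref{th:cr}(i)) via one application of Lemma~\ref{th:cr}(ii), peeling off one $M_{g.j}$ from the left and one $M_{a.j}$ from the right at each step; after $n$ rounds it reaches $M_{[1..n]}\PC\Adapt{M}\PC M_a \Impl M_g$ and closes with Lemma~\ref{th:adapt}. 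You instead bundle everything into $C,A,G$, establish the aggregate $C\PC A\Impl G$ once, and discharge the whole goal with a \emph{single} use of Lemma~\ref{th:cr}(ii).

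What each buys: the paper's peeling argument stays strictly within Lemmas~\ref{th:cr} and~\ref{th:adapt} and never needs your ``conjunctive lifting'' $\bigwedge_j(N\Impl M_{g.j})\Rightarrow N\Impl G$; the price is $n$ invocations of the circular rule and index bookkeeping. Your one-shot version is structurally cleaner and exposes~(b) directly as one premise of the circular rule, but it relies on that extra lemma, which---though standard for reactive modules---is not supplied here and requires the await-dependency clause of $\Impl$ (not only the trace clause) to be checked for the composite $G$. Your handling of hiding and of the side conditions of Lemma~\ref{th:cr}(ii) is fine; the paper sidesteps those simply because it never forms the large composites $A$ and $G$ on the right of $\Impl$.
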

\begin{proof}
  Let $M_{[j..k]}$ represent $M_j \PC \cdots \PC M_k$ if $j \leq k$ and
  the empty module $M_\top$ if $j > k$ (we use the same notation for $M_{a.[j..k]}$ and $M_{g.[j..k]}$).
  We rewrite the condition~(b) as follows (we assume $j=n$ initially):
  \begin{equation} \label{eq:bmod}
    M_{[(j+1)..n]} \PC \Adapt{M} \PC M_a \PC M_{g.[1..j]} \Impl M_g \PC M_{a.[1..j]}.
  \end{equation}
  From (a), $(M_j \PC M_{a.j}) \PC (M_{a.[1..(j-1)]} \PC M_g) \Impl M_j \PC M_{a.j}$ (Lem.~\ref{th:cr}~(i)), and the transitivity of $\Impl$, we have
  \begin{equation} \label{eq:amod}
    M_j \PC M_g \PC M_{a.[1..j]} \Impl M_{g.j}.
  \end{equation}
  From \eqref{eq:bmod}, \eqref{eq:amod} and Lem.~\ref{th:cr}~(ii), we have
  \begin{equation*}
    M_{[j..n]} \PC \Adapt{M} \PC M_a \PC M_{g.[1..(j-1)]} \Impl M_g \PC M_{a.[1..j]} \PC M_{g.j}.
  \end{equation*}
  The rhs can be simplified to $M_g \PC M_{a.[1..(j-1)]}$ (by Lem.~\ref{th:cr}~(i) and the transitivity).
  By repeating the above for $j=n-1,\ldots,1$, we will obtain the fact, which is equivalent to the goal due to Lem.~\ref{th:adapt}.
  \qed
\end{proof}

Proof of a hierarchical module may be obtained by analyzing the content of the top-level as a nested PC of submodules.
However, the proof process for a verification goal is non-trivial in general. 
Applying the rules in Lem.~\ref{th:cr} requires a number of deductions, introduction of PCs of submodules and property modules, and adjusting the form of PC terms while checking module compatibility.
The proposed method transfers the analysis of the compositional structure of the top-level to the verification process of the condition~(b).

When a system contains multiple hierarchies, we can simply repeat the process to check the two conditions in Th.~\ref{t:main} in a bottom-up fashion to verify the whole system.
Each process for a hierarchical module decomposes it after deriving an adapter, then either verifies the contract for the PC of the submodules or composes them with other submodules of the parent.

\section{Implementation}
\label{s:impl}

We have implemented the proposed method by extending the Kind2 tool.

\subsection{Lustre, CoCoSpec, and Kind2}

\emph{Kind2} (version~1.6.0)\cite{championKIND2016} is an SMT-based model checking tool (we used Z3~4.12.4 as an SMT solver).
Its input language is \emph{Lustre}~\cite{caspiLUSTRE1987}, 
a textual language for describing hierarchical synchronous modules, and the modules can be annotated contracts with the \emph{CoCoSpec} language~\cite{championCoCoSpec2016}.
\begin{example} \label{e3}
Fig.~\ref{f:lustre} shows an example described with Lustre and CoCoSpec.
It is a module similar to Ex.~\ref{e2} in which the counters are replaced with second-order digital filters.
A Lustre node is defined by a section started with \lstinline|node|,
which is followed by 
\begin{itemize}
  \item the node name (e.g. \lstinline|Filter| and \lstinline|Toplevel|), 
  \item the input variable list (in parentheses), 
  \item the output variable list (with keyword \lstinline|returns|), 
  \item the contract annotation (described within a comment), 
  \item the local variable list (with keyword \lstinline|var|), and 
  \item the body (enclosed in \lstinline|let| and \lstinline|tel|).
  The line ``\lstinline|--%MAIN;|'' specifies that the node is a verification target.
\end{itemize}
We consider modules to be instances of Lustre nodes whose input and output variables are substituted by the arguments.
The above program describes the verification conditions 
$M_{\texttt{F}0} \PC M_{\texttt{F}0.a} \Impl M_{\texttt{F}0.g}$ and
$M_{\texttt{T}0}[M_{\texttt{F}1},M_{\texttt{F}2}] \PC M_{a.\texttt{T}0} \Impl M_{g.\texttt{T}0}$
where $M_{\texttt{N}i}$ represents the $i$th instance of the node \texttt{N} and $M_{a.\texttt{N}i}$ and $M_{g.\texttt{N}i}$ represent the annotated properties (we abbreviate \texttt{Filter} and \texttt{Toplevel} as \texttt{F} and \texttt{T}).
\end{example}

\begin{figure}[t!]
  \lstset{frame=single}
  \lstset{numbers=left}
  \begin{lstlisting}[basicstyle=\ttfamily\footnotesize, columns=flexible, keepspaces=true]
node Filter (in1 : bool; in2 : real) 
returns (out1 : bool; out2 : real);
(*@contract
  assume    in1;  assume    -1.0 <= in2 and in2 <= 1.0;
  guarantee out1; guarantee -1.0 <= out2 and out2 <= 1.0;
*)
var sum, D1, D2: real;
let
  out1 = in1;
  sum = 0.0582*(if in1 then in2 else -in2) - (-1.49*D1) - 0.881*D2;
  D1 = 0.0 -> pre sum;      D2 = 0.0 -> pre D1;
  out2 = (sum - D2) / 1.25;
tel

node Toplevel (in : real) returns (out : real);
(*@contract
  assume    -1.0 <= in and in <= 1.0;
  guarantee -1.0 <= out and out <= 1.0;
*)
var b1, b2, pre_b2 : bool; s1 : real;
let
  b1, s1 = Filter(b2, in);  pre_b1 = true -> pre b1;
  b2, out = Filter(pre_b1, s1);
  --%MAIN;
tel
\end{lstlisting}
\caption{An example Lustre program annotated with CoCoSpec.}
\label{f:lustre}
\end{figure}

When an input describes a goal (verification condition for the target module) $M[M_1..M_n] \PC M_a \Impl M_g$ and conditions for submodules $M_j \PC M_{a.j} \Impl M_{g.j}$ where $j = 1,\ldots,n$, Kind2 is able to verify its validity in three modes:
\begin{itemize}
  \item \emph{Monolithic mode} that interprets the target $M[M_1..M_n]$ as a module (cf. Def.~\ref{d:hierarchical}) and verifies only the goal.
  \item \emph{Modular mode} that verifies only the conditions for submodules $M_1$, \ldots, $M_n$. 
  \item \emph{Compositional mode} that verifies the goal with an abstraction as described in Sect.~\ref{s:hierarchical}.
\end{itemize}
We used Kind2 with the default setting, which runs several model checking algorithms e.g. BMC, $k$-induction and PDR, in parallel.

\subsection{Implementation of the proposed method}

We have implemented the proposed method in OCaml by modifying Kind2. 
The function we have added translates a hierarchical Lustre program to a program in which the hierarchical modules are replaced with adapter modules (here, we also refer to Lustre \emph{nodes} as {modules}).
The input is Lustre programs annotated with CoCoSpec contracts. 
%
It generates a list of reactive modules by applying the following processes:
\begin{enumerate}
  \item \emph{Instantiation of module definitions}.
    Because a Lustre node may be invoked several times by the parent module, we instantiate a node definition with the real argument of each call.
  \item \emph{Modification of the top-level into an adapter}.
    For each hierarchical module, we remove the submodule invocation statements and modify the variable list as described in Def.~\ref{d:adapt}.
  \item \emph{Pretty printing}. 
    The intermediate data will be printed as a decomposed and properly annotated Lustre program. 
\end{enumerate}
The source code is available at \url{https://github.com/dsksh/kind2}.

By feeding the output Lustre program to Kind2 with the {modular mode}, the satisfaction of the assume-guarantee contract by each module will be checked.
The success of the process implies the validity of the annotated top-level module in the original Lustre program.

\section{Experiment}
\label{s:xp}

To evaluate the effectiveness and the performance of the proposed method, 
we have conducted the verification of several examples using the implementation.
The experiment was done with a MacBook Pro (10-core Apple M2 Pro chip and 32GB RAM).


We prepared several circular hierarchical modules for the experiment.
\begin{itemize}
  %
  \item \emph{Feedback loop system containing $n$ digital filters} ($n$Filters).
  This is an extended and parameterized version of Ex.~\ref{e3}.
  The \lstinline|Toplevel| instance has $n$ \lstinline|Filter| instances and they form a loop as illustrated in Fig.~\ref{f:ex:nf}.
  We gave the same assume-guarantee contracts to \lstinline|Filter| and \lstinline|Toplevel| as in Ex.~\ref{e3}, and verified that \lstinline|Toplevel| satisfies the contract.
  \item \emph{DC motor control (MCtrl)}.
  This is a more practical and typical example in which a motor and a controller are described as submodules $M_2$ and $M_1$ and form a feedback loop as illustrated in Fig.~\ref{f:ex:mc}.
  We annotated the top-level module with 8 safety (guarantee) properties,
  and submodules with 3 to 5 assume/guarantee properties.
\end{itemize}

\begin{figure}[t!]
  \centering
  \subfloat[$n$Filters.\label{f:ex:nf}]{\includegraphics[height=.075\textheight]{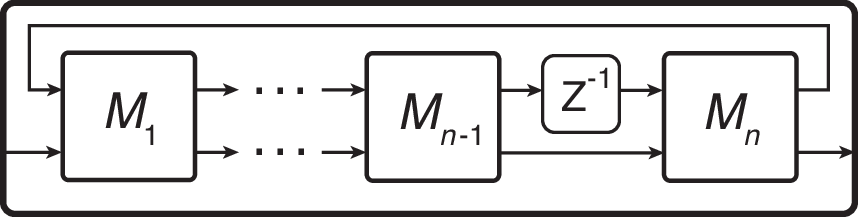}}
  \hspace*{4em}
  \subfloat[MCtrl.\label{f:ex:mc}]{\includegraphics[height=.075\textheight]{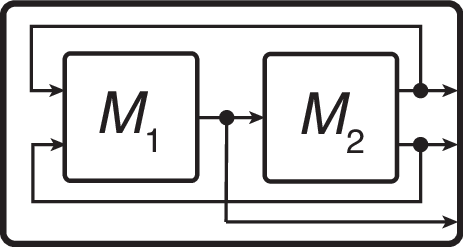}}
  \caption{SFDs of the example modules.} 
  \label{f:ex}
\end{figure}

\begin{table}[t]
    \centering
    \small
    \caption{Execution result.} \label{t:result}
    \setlength{\tabcolsep}{10pt}
    \begin{tabular}{l|rrr}
        \toprule
        Example & Time (mono.) & Time (comp.) & \#Guarantees \\
        \midrule
        $2\text{Filters}$  & 204s & 14.9s & 7  \\
        $3\text{Filters}$  & TO   & 14.9s & 9  \\
        $36\text{Filters}$ & TO   & 15.4s & 75 \\
        MCtrl & TO & 3.1s & 24 \\
        \bottomrule
    \end{tabular}
\end{table}

We first verified the target hierarchical modules with the monolithic mode of Kind2.
Second, we decomposed the target modules into the PC forms using the proposed method, and then verified that the submodules and the adapters satisfied their contracts (i.e. the conditions in Th.~\ref{th:cr}) using Kind2.
Note that, if we verify any of the examples with the compositional mode of Kind2, it will result in a spurious counterexample as explained in Ex.~\ref{ex:spurious}.

\vspace*{-1em}

\subsubsection{Experimental result.}

The result is shown in Table~\ref{t:result}.
Each column shows wall clock time for the monolithic process (``mono.'') or the process with the proposed method (``comp.''), and the total number of guarantee properties verified with the proposed method (``\#Guarantees'').
``TO'' means the process did not terminate within 600s.

\vspace*{-1em}

\subsubsection{Discussions.}

Since the submodules i.e. the digital filters, motor and controller behave in a stateful manner,
using Kind2 to check the safety properties requires analyzing the execution prefixes of certain lengths, which would be time-consuming.
Therefore, the verification of the system at once resulted in timeouts except for 2Filters.
%
When the proposed method verified the examples by dividing them into modules, the process was more efficient because the numbers of rounds analyzed were reduced.

In each experiment for $n$Filters, the proposed method verified only a \lstinline|Filter| instance because $n$ verification conditions for submodules were for the same Lustre node.
Therefore, differences in the value of $n$ should appear only in the verification of adapter module;
the number of read/write variables of the module and the number of corresponding guarantee properties would increase.
The execution time increased slightly;
although this was due to the simplicity of the top-level content,
we consider the overhead of our method was small since increasing the number of variables did not have much effect.
The effectiveness of our method was also confirmed in MCtrl.

%

\section{Related Work}
\label{s:related}

The hierarchy with nesting parallel compositions has been considered in an implementation~\cite{alurMOCHA1998} and extensions e.g. \cite{alurModular2000} of the reactive module theory.
The hierarchization we consider in this paper is slightly different from the nesting of composition operations; ours corresponds to the embedding operation in dataflow diagrams.
Alur et al.~\cite{alurCompositional2006} address two kinds of hierarchies by agents and modules, and propose to perform reasoning on them separately; they do not consider a transformation between the two unlike this work.

More recently,
there has been work on compositional verification of hierarchical Simulink models~\cite{murugesanCompositional2013,bostromContractbased2016,dragomirCompositional2016,dragomirRefinement2020}.
As in this paper, these methods give contracts to subsystems and verify the models according to a hierarchical structure.
Bostr\"om and Wiik~\cite{bostromContractbased2016} propose to convert both models and contracts to specific dataflow graphs, then to sequential programs, and perform program verification. 
Their method does not support models with algebraic loops, and it is not clear whether it can handle the circular systems we consider.
Dragomir et al.~\cite{dragomirRefinement2020,preoteasarefinement2022} 
introduce QLTL properties and dedicated refinement calculus to perform verification on hierarchical models interactively on Isabelle.
Notably, they handle liveness properties which we do not.
Their verification method requires human support unlike ours.
%
Since they do not seem to provide any inference rules that explicitly deal with circular cases, it is unclear whether our method is applicable to their framework.

The Kind2 tool~\cite{championKIND2016} supports the modular and compositional model checking of hierarchical Lustre programs as described in Sect.~\ref{s:hierarchical} and Sect.~\ref{s:impl}. It is limited in handling circular programs.

Dragomir et al.~\cite{dragomirCompositional2016} propose to analyze hierarchical models with composite predicate transformers (CPTs) that use several composition operators, e.g. serial and parallel compositions and feedbacks.
In comparison, we use only one composition operator and formalize the hierarchical structure in a fixed way.
Tripakis et al.~\cite{tripakisModular2018} formalize hierarchical dataflow diagrams and propose a profiling method to assure the modularity; although the subject is similar, their purpose is different from ours.
Bakirtzis et al.~\cite{bakirtzisCategorical2021} propose a general framework for various compositional CPS models, which includes concepts equivalent to modules and hierarchies and a formalization of contracts;
however, they do not discuss either a transformation between the two concepts or verification methods.
Fong and Spivak~\cite{fongInvitation2019} formalize various hierarchical models of reactive systems, but do not consider synchronous behavior or assume-guarantee verification.

Automation of contract generation has been studied, e.g. \cite{pasareanuLearning2008,abdelkaderAutomated2018,neeleCompositional2023}.
We assume that contracts are given; the combination of our method and contract  generation is a future issue.


%

\section{Conclusion}

We have formalized hierarchical synchronous systems based on the theory of reactive modules.
We have then proposed a verification method that decomposes a hierarchical module into non-hierarchical modules and checks each module separately to show that the whole system satisfies the contract.
As the experimental results show, the proposed method can effectively verify the systems with circular structures, which are suitable for describing plant control CPSs.

In general, compositional reasoning requires proof of the consistency among the verification results of each module, but this is not necessary when a top-level module is decomposed with our method.
The proof task to analyze the system structure is delegated to the implementation relation on the adapter, and then it is efficiently discharged using a tool like Kind2.

Future work includes integrating the proposed method with other compositional methods such as for triggered modules and different-rate modules.
Also, cooperation with automatic contract generation methods remains an issue.

\subsubsection*{Acknowledgements.}

Tien Duc Ngo contributed to the preliminary experiment of this work.
This work was supported by JST, CREST Grant Number JPMJCR23M1 and 
JSPS, KAKENHI Grant Number 23K11969.

\bibliographystyle{splncs04}
\bibliography{cv}


%

%


\end{document}